\newtheorem{thm}{Theorem}[section]
\newtheorem{lem}[thm]{Lemma}
\DeclareMathAlphabet{\mathbmit}{OML}{cmm}{b}{it}
\renewcommand{\vec}[1]{\mathbmit{#1}}
\let\matr\vec
\begin{document}
\date{}

\nocite{*}

\title{A note on linear fractional set packing problem }

\author{Pooja Pandey \thanks{ Corresponding author. Email: poojap@sfu.ca}\\
Department of Mathematics, Simon Fraser University\\
 250 - 13450 – 102nd Avenue, Surrey, BC, V3T 0A3, Canada}

\maketitle

\begin{abstract}
  In this note we point out various errors in the paper by Rashmi Gupta  and R. R. Saxena, {\it Set packing problem with linear fractional objective function}, International Journal of Mathematics and Computer Applications Research (IJMCAR), 4 (2014) 9 - 18. We also provide some additional results.
\end{abstract}



\section{Introduction}
The set packing problem,  set covering problem, and set partitioning problem are among the most well-studied problems in combinatorial optimization and they have wide range of real life applications ~\cite{ aro,aro1, baz, bec, gar, lem}.   Arora, Puri and Swarup ~\cite{aro,aro1} developed couple of solution algorithms for the set covering problem with the linear fractional objective function, where they  exploited the structural properties  of the set covering problems. Later, Gupta and Saxena ~\cite{gupta1} extended these results for the set packing problems with the linear fractional objective function. \\

In this note, we show that the properties established in ~\cite{gupta1} are incorrect. Gupta and Saxena ~\cite{gupta} extended results of~\cite{aro,aro1} to the linear fractional set packing  problems, but these extensions  suffer many drawbacks since they overlooked the structural properties of the set packing problem and ignore the required conditions  for their results to be correct.

\section{The linear fractional set packing problem} \label{lfspp}

Let $\mathcal{E} = \{1,2,\dotsc,m\}$ be a finite set and  $\mathcal{F} = \{S_1,S_2,\dots,S_n\}$  be a family of subsets of $\mathcal{E}$. The index set for elements of $\mathcal{F}$ is denoted by $G= \{1,2 ,\dotsc, n\} $. For each element $j \in G$, a cost $c_j$ and a weight $d_j$  are prescribed. We refer to $c_j$ as the  {\it linear cost } of the set $S_j$ and $\vec{c} = (c_1,\dotsc,c_n)$ as the {\it linear cost vector}. Similarly $d_{j}$ is referred to as the {\it linear weight} of the set $S_j$ and $\vec{d} = (d_1,\dotsc,d_n)$ as the {\it linear weight vector}. $\alpha$ and $\beta $ are constants where  $\beta > 0$.\\

A subset $H$ of $G$ is said to be a {\it pack } of $\mathcal{E}$ if $\bigcup_{j \in H} S_j = \mathcal{E}$, and  $j, k \in H$, $j \neq k$, implies $S_j \bigcap S_k = \emptyset.$ \\

Then the {\it linear set packing problem} (LSPP) is to select a {\it pack } $ H=\{\pi(1), \dotsc, \pi(h)\} $ such that  $\sum_{i=1}^{h} c_{\pi(i)}$ is maximized. Likewise the {\it linear fractional set packing problem} (LFSPP) is to select a { \it pack } $H= \{\sigma(1), \dotsc,\sigma(h)\}$ such that $\dfrac{\sum_{i=1}^{h} \vec{c}_{\sigma(i)} + \alpha}{\sum_{i=1}^{h} \vec{d}_{\sigma(i)}+\beta} $ is maximized.\\

For each $i\in \mathcal{E}$, consider the vector $\vec{a}_i=(a_{i1}, a_{i2}, \hdots, a_{in})$ where
\begin{equation*}
a_{ij} =
\begin{cases}
1 &\text{if $i\in S_j$}\\
0 &\text{otherwise.}
\end{cases}
\end{equation*}
and $\matr{A}= (a_{ij})_{m \times n}$ be an $m \times n$ matrix. Also, consider the decision variables $x_1,x_2,\ldots ,x_n$ where

\begin{equation*}
x_j =
\begin{cases}
1 &\text{if $j$ is in the pack}\\
0 &\text{otherwise.}
\end{cases}
\end{equation*}

The vector of decision variables is represented as $ \vec{x} =  (x_1, \hdots, x_n )^T$ and $\vec{1}$ is column vector of size $n$ where all entries equal to 1. Then the LSPP and LFSPP can be formulated respectively as  0-1 integer programs

\begin{align}
\nonumber \mbox{LSPP:\hspace{1cm}Maximize~ } & \vec{c}\vec{x}\\
\label{eeq5} \mbox{Subject to } & \matr{A}\vec{x} \leq \vec{1}\\
\label{eeq6} &\vec{x} \in \{0,1\}^n
\end{align}
and
\begin{align}
\nonumber \mbox{LFSPP:\hspace{1cm}Maximize~ } & \dfrac{\vec{c}\vec{x} + \alpha}{\vec{d}\vec{x}+\beta}\\
\label{eeq7} \mbox{Subject to } & \matr{A}\vec{x} \leq \vec{1}\\
\label{eeq8} &\vec{x} \in \{0,1\}^n
\end{align}

It is assumed that $\vec{c} \geq \vec{0}$, $\alpha = 0$ and $\beta $ is a scalar such that $\vec{d}\vec{x} + \beta > 0$. Throughout the paper we will assume that $\vec{d}\vec{x} + \beta > 0$ for any feasible solution of LFSPP. \\

The continuous relaxations of LSPP and LFSPP, denoted respectively by LSPP(C) and LFSPP(C),  are obtained by replacing the constraint set $\vec{x} \in \{0,1\}^n$ by $\vec{x} \ge \vec{0}$, respectively in LSPP and LFSPP.\\

The family of feasible solutions of both LSPP and LFSPP is given by $S = \{ \vec{x}  | A\vec{x}\le \vec{1}, \vec{x} \in \{0,1\}^n \}$ and the family of feasible solutions for their continuous relaxations is given by $\bar{S} = \{ \vec{x}  | A\vec{x}\le \vec{1}, \vec{x} \ge \vec{0} \}$.  \\

Following are some definitions given in \cite{gupta1}.  A solution $\vec{x} \in S$ which satisfies (\ref{eeq7}) and  (\ref{eeq8})  is said to be a \textit{pack solution}. For any \textit{pack} $H$, a column of $\matr{A}$ corresponding to  $j \in G $ is said to be redundant if $H + \{j\}$ is also a \textit{pack}.  If a \textit{pack} corresponds to  one or more redundant columns, it is called a \textit{redundant pack}. A \textit{pack} $ H^*$ is said to be a \textit{prime pack}, if none of the columns corresponding to $j^* \in  G$ is \textit{redundant}. A solution corresponding to the \textit{prime pack} is called a \textit{prime packing solution}.\\

The linear fractional set covering problem (LFSCP) is obtained by replacing (\ref{eeq7}) with
\begin{equation}
\matr{A}\vec{x} \geq \vec{1} \label{lfscp}
 \end{equation}

 and  changing the problem from  maximization to minimization problem in the  0-1 integer formulation of  LFSPP. Any $\vec{x} \in \{0,1\}^n$ satisfying (\ref{lfscp}) is called  a \textit{cover solution}.\\

 Any $\vec{x}\in S$ is called a \textit{cover solution} and an optimal solution to the underlying problem LFSCP  is called an \textit{optimal cover solution}. Note that each cover solution corresponds to a cover and viceversa. A \textit{cover} $P$ is said to be redundant if $P - \{j\}$  with $j \in P$ is also a \textit{cover}. A cover which is not \textit{redundant} is called a \textit{prime cover}.
The incidence vector $\vec{x}$ corresponds to \textit{prime cover} is called a \textit{prime cover solution}.\\

For the linear fractional set covering problem, Arora, Puri, and Swarup  \cite{aro} proved that every optimal cover is a prime cover, if $c_{j'} s$ and $d_{j'} s$ satisfy certain conditions.\\

Gupta and Saxena \cite{gupta1}  claimed an extension of the above result to LFSPP,  assuming $\vec{c} \ge \vec{0}, \vec{d} \ge \vec{0}$ and $\vec{d}\vec{x} + \beta > 0$. More precisely, they claimed:

\begin{thm}(Theorem 2 of \cite{gupta1})\label{thm1}
 If the objective function in LFSPP has finite value then, there exists a prime pack solution where this value is attained.
\end{thm}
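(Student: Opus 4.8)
The natural line of attack is a greedy \emph{completion} argument: start from an optimal solution of LFSPP and keep adjoining redundant columns until none remains, checking along the way that the objective value never drops. In more detail, the plan is the following. First I would note that LFSPP attains its optimum at all, since the feasible set $S$ is finite and nonempty (it contains $\vec{x}=\vec{0}$) and the objective is assumed finite on it; let $\vec{x}^{*}\in S$ be a maximizer, with associated pack $H^{*}$. If $H^{*}$ is a prime pack we are finished. Otherwise some column $j$ is redundant for $H^{*}$, meaning $H^{*}+\{j\}$ is again a pack; I would replace $H^{*}$ by $H^{*}+\{j\}$ and iterate. Since $G$ is finite and every step enlarges the index set, the procedure halts at a prime pack after finitely many steps, and provided each step does not decrease the objective, that prime pack is optimal --- which would be the theorem.

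Everything rests on the monotonicity claim in the iteration, and this is exactly the point at which both the argument and the statement collapse. Writing $a=\vec{c}\vec{x}^{*}\ge 0$ and $b=\vec{d}\vec{x}^{*}+\beta>0$, adjoining a redundant column $j$ turns the value $a/b$ into $(a+c_j)/(b+d_j)$, and
\[
\frac{a+c_j}{b+d_j}\;\ge\;\frac{a}{b}\qquad\Longleftrightarrow\qquad c_j\,b\;\ge\;a\,d_j .
\]
None of the standing assumptions $\vec{c}\ge\vec{0}$, $\vec{d}\ge\vec{0}$, $\vec{d}\vec{x}+\beta>0$ forces this: if the redundant set $S_j$ has a poor cost-to-weight ratio (for instance $c_j=0$ and $d_j>0$), adjoining it \emph{strictly lowers} the fractional objective. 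So I would expect the real outcome of this analysis to be a counterexample rather than a proof, and a minimal one already works: let $\mathcal{E}=\{1,2\}$, $\mathcal{F}=\{S_1,S_2\}$ with $S_1=\{1\}$, $S_2=\{2\}$, and $c_1=1$, $c_2=0$, $d_1=d_2=1$, $\alpha=0$, $\beta=1$. Since $S_1\cap S_2=\emptyset$, column $2$ is redundant for $\{1\}$ and column $1$ for $\{2\}$, so the unique prime pack is $\{1,2\}$, with objective value $\tfrac{1}{3}$, whereas the pack $\{1\}$ gives $\tfrac{1}{2}$. Hence the optimum $\tfrac{1}{2}$ is attained at no prime pack solution, contradicting Theorem~\ref{thm1}.

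The constructive thing to do next is to isolate the hypothesis under which the statement becomes true. The displayed equivalence shows the completion step succeeds exactly when every redundant column $j$ satisfies $c_j/d_j\ge v^{*}$, where $v^{*}$ denotes the optimal value of LFSPP; imposing a condition of this shape on $\vec{c}$ and $\vec{d}$, the greedy completion does terminate at a prime pack with the objective value preserved, and one recovers a correct theorem --- the packing analogue of the hypothesis Arora, Puri and Swarup use in the covering case \cite{aro}. So the main obstacle here is not a gap waiting to be filled but the fact that Theorem~\ref{thm1} is false as stated; what is left is to pin down the missing assumption and to confirm that with it the argument above actually closes.
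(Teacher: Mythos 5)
Your proposal is correct and takes essentially the same route as the paper: the paper likewise treats this claimed theorem of Gupta and Saxena as false and refutes it by an explicit counterexample (there with $\beta=2$, $\vec{c}=(1,2,5)$, $\vec{d}=(4,4,6)$ and a $2\times 3$ constraint matrix, where the uniquely optimal pack $\{3\}$ is redundant and no prime pack solution is optimal). Your smaller two-set instance is equally valid, and your closing observation about the ratio hypothesis needed to make the greedy completion monotone is precisely the content of the paper's subsequent corrected theorem, whose condition (b) requires the relevant ratios of partial sums of the $c_i$'s and $d_i$'s to exceed the objective value at the pack solution.
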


This result is not true as established by the following example. Let

\[{\begin{array}{cccc}

\beta = 2,

&

\vec{c}=(1,2,5),

&

 \vec{d}=(4,4,6), \mbox{ and }

&

  \matr{A}=
  \begin{pmatrix}
   1  & 1 & 0 \\
1 & 0  & 1 \\
  \end{pmatrix}

  \end{array} }
\]
 For the LFSPP  with $\matr{A},\vec{c}$, $\vec{d}$ and $\beta$ defined as above, it can be verified that  $ \vec{x}^*= (0,0,1)^T$ is an optimal solution  with the objective function value $\dfrac{5}{8}= 0.625$. The optimal \textit{pack} corresponding to $\vec{x}^*$ is $H^*= \{3\}$ which is a \textit{redundant pack} since $ H^* + \{2\} = \{2,3 \}$ is also a \textit{pack}. All other pack solutions and their respective objective function values are  listed below:
\begin{equation*}
\begin{aligned}
& \vec{x}^1 = (1,0,0)^T \mbox{ prime pack solution} \quad &f(\vec{x}^1) = \frac{1}{6} = 0.166\overline{66}\\
&  \vec{x}^2 = (0,1,1)^T  \mbox{ prime pack solution } \quad &f(\vec{x}^2) = \frac{7}{12} = 0.58\overline{33} \\
 &   \vec{x}^3 = (0,0,0)^T \mbox{ redundant pack solution }  &f(\vec{x}^3) = 0 \quad \quad  \quad \quad \quad \\
&  \vec{x}^4 = (0,1,0)^T \mbox{ redundant pack solution }  \quad & f(\vec{x}^4) = \frac{1}{3}=0.33\overline{33}\\
\end{aligned}
\end{equation*}

None of these corresponds to an optimal solution for LFSPP. In particular, no prime pack solution is optimal for the instances of LFSPP constructed above, which contradicts Theorem \ref{thm1}. However, a variation of the Theorem \ref{thm1} can be proved as noted below.

\begin{thm} There always exists a prime pack optimal solution for LFSPP if\\
  (a) $\vec{c} > \vec{0}  $ and $\vec{d}  < \vec{0}$  or \\
  (b) any ratio of the partial sums of $c_i$'s or $d_i$'s is greater than the value of the objective function at the pack solution and also the partial sum of $d_i$'s is positive.
\end{thm}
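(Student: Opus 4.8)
The plan is to start from any optimal pack solution and enlarge it, one redundant column at a time, to a prime pack, checking that no such step decreases the objective. Concretely, let $\vec{x}^{*}$ be an optimal solution of LFSPP with associated pack $H$ (an optimal solution exists because $\{0,1\}^{n}$ is finite), and write $C=\vec{c}\vec{x}^{*}=\sum_{k\in H}c_{k}$ and $D=\vec{d}\vec{x}^{*}+\beta$, so that the optimal value is $v=f(\vec{x}^{*})=C/D$ with $D>0$ by the standing hypothesis. If $H$ is already a prime pack there is nothing to prove; otherwise some column $j\notin H$ is redundant, i.e.\ $H+\{j\}$ is again a pack, hence a feasible solution, hence $D+d_{j}>0$.

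The one real computation is the effect of absorbing $j$:
\[ f\big(H+\{j\}\big)-f(H)\;=\;\frac{C+c_{j}}{D+d_{j}}-\frac{C}{D}\;=\;\frac{D\,c_{j}-C\,d_{j}}{D\,(D+d_{j})}, \]
whose denominator is positive. It then suffices to check that each hypothesis forces the numerator $D\,c_{j}-C\,d_{j}$ to be nonnegative. Under (a), $c_{j}>0$ gives $D\,c_{j}>0$, while $d_{j}<0$ together with $C\ge 0$ gives $C\,d_{j}\le 0$, so the numerator is in fact positive. Under (b), the hypothesis says precisely that the relevant ratio of partial sums, here $c_{j}/d_{j}$ with the corresponding partial sum $d_{j}>0$, is at least $v=C/D$, and $D\,c_{j}-C\,d_{j}\ge 0$ is just this inequality $c_{j}/d_{j}\ge C/D$ after clearing the positive denominators (the mediant-type estimate). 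So in either case $f(H+\{j\})\ge f(H)=v$.

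Since $v$ is optimal, the inequality $f(H+\{j\})\ge v$ must be an equality, so $H+\{j\}$ is again an optimal pack solution with value $v$. Replacing $H$ by $H+\{j\}$ and iterating, each step strictly enlarges the pack while preserving feasibility, optimality, and — in case (b) — the standing ratio comparison, because the objective value is pinned at $v$ throughout the process. As $|G|=n$ is finite, after finitely many absorptions we reach a pack $H'$ admitting no redundant column, i.e.\ a prime pack, with $f(H')=v$; its incidence vector is the desired prime pack optimal solution. (In case (a) the same computation shows more: since the step is a \emph{strict} increase, no optimal pack can have a redundant column, so every optimal solution is already a prime pack solution.)

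The step that I expect to require the most care is pinning down, in part (b), exactly which ``ratios of partial sums'' the hypothesis is meant to range over, and verifying that this family is rich enough that the comparison still applies to every intermediate pack $H,\,H+\{j\},\dots$ produced along the way — which it is, since the objective value never moves off $v$. Once that bookkeeping is in place, the betweenness of $\frac{C+c_{j}}{D+d_{j}}$ between $\frac{C}{D}$ and $\frac{c_{j}}{d_{j}}$ does all the work; part (a) is immediate from the sign pattern of $D\,c_{j}-C\,d_{j}$.
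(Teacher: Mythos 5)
Your proof is correct and follows essentially the same route as the paper: both arguments absorb redundant columns into an optimal pack and use the cross-multiplied (mediant) inequality $Dc_j - Cd_j \ge 0$ to show the objective cannot decrease, so a prime pack attaining the optimum is reached. The only cosmetic difference is that you add redundant columns one at a time rather than passing to a prime pack in a single step, which if anything makes the use of hypothesis (b) (applied to singleton partial sums) slightly more transparent.
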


\begin{proof}

Let us assume that there exists an optimal pack $H_1$ for LFSPP which is a  redundant optimal pack of LFSPP. Since $H_1$ is a redundant pack, a prime pack $H_2$ can be derived from $H_1$  by adding  redundant columns of $H_1$. \\

The objective function value of LFSPP for  $H_1$ and $H_2$ are $Z_{H_1}$ and $Z_{H_2}$ respectively:
\begin{equation*}
Z_{H_1} = \dfrac{\sum_{j\in H_1} c_j  }{\sum_{j\in H_1} d_j+\beta} \text{ and } ~ Z_{H_2} = \dfrac{\sum_{j\in H_1} c_j  + \sum_{j\in \{H_2 - H_1\}} c_j  }{\sum_{j\in H_1} d_j  +\sum_{j\in \{H_2 - H_1\}} d_j   +\beta}
\end{equation*}


\begin{itemize}
\item[(a)]

Since $\vec{c} > \vec{0}  $ and $\vec{d}  < \vec{0}$, therefore,

  \begin{equation}
 \label{e1} \sum_{j\in H_1} c_j < \sum_{j\in H_1} c_j  + \sum_{j\in \{H_2 - H_1\}} c_j
  \end{equation}
  and
  \begin{equation}
  \label{e2} \sum_{j\in H_1} d_j+\beta > \sum_{j\in H_1} d_j +\sum_{j\in \{H_2 - H_1\}} d_j   +\beta
  \end{equation}

  Since denominator of the objective function value is always positive,  dividing inequality (\ref{e1}) by (\ref{e2}), we get:

  \begin{equation*}
   \dfrac{\sum_{j\in H_1} c_j }{\sum_{j\in H_1} d_j+\beta} < \dfrac{\sum_{j\in H_1} c_j  + \sum_{j\in \{H_2 - H_1\}} c_j }{\sum_{j\in H_1} d_j  +\sum_{j\in \{H_2 - H_1\}} d_j   +\beta}
  \end{equation*}

  which gives

  \begin{equation*}
     Z_{H_1} < Z_{H_2}.
    \end{equation*}

   This shows that $H_2$ is an optimal pack for LFSPP instead of $H_1$,  contradicts the optimality of $H_1$, therefore $H_1$ is a prime pack of LFSPP . This completes the proof of part (a).\\

   Note: In this case if we relax  $\vec{c} > \vec{0}$, then above theorem is no longer true.\\

\item[(b)]

    It is given that any ratio of the partial sums of $c_i$'s or $d_i$'s is greater than the value of the objective function at the pack solution and also the partial sum of $d_i$'s is positive and denominator of the objective function is positive for any feasible solution of LFSPP, therefore:

    \begin{eqnarray*}
      \dfrac{ \sum_{j\in \{H_2 - H_1\}} c_j }{\sum_{j\in \{H_2 - H_1\}} d_j } && >  Z_{H_1}, \text{ which implies }\\
      \dfrac{ \sum_{j\in \{H_2 - H_1\}} c_j }{\sum_{j\in \{H_2 - H_1\}} d_j } && >  \dfrac{\sum_{j\in H_1} c_j }{\sum_{j\in H_1} d_j +\beta}, \text{ cross multiplication gives } \\
      (\sum_{j\in \{H_2 - H_1\}} c_j )(\sum_{j\in H_1} d_j +\beta) && > (\sum_{j\in \{H_2 - H_1\}} d_j )( \sum_{j\in H_1} c_j )\\
    \end{eqnarray*}

    adding both sides $(\sum_{j\in H_1} c_j)( \sum_{j\in H_1} d_j +\beta) $  will give

    \begin{eqnarray*}
   &&  (\sum_{j\in H_1} c_j)( \sum_{j\in H_1} d_j +\beta) +  (\sum_{j\in \{H_2 - H_1\}} c_j )(\sum_{j\in H_1} d_j +\beta)  >  \\
       &&   (\sum_{j\in H_1} c_j)( \sum_{j\in H_1} d_j +\beta) + (\sum_{j\in \{H_2 - H_1\}} d_j )( \sum_{j\in H_1} c_j )  \\
    \end{eqnarray*}

    after simplifying, we get

    \begin{equation*}
    ( \sum_{j\in H_1} d_j +\beta) (\sum_{j\in H_1} c_j+ \sum_{j\in \{H_2 - H_1\}} c_j )    >  (\sum_{j\in H_1} c_j)( \sum_{j\in H_1} d_j +\beta+ \sum_{j\in \{H_2 - H_1\}} d_j ) \\
    \end{equation*}
 which is equivalent to
    \begin{equation*}
    \dfrac{(\sum_{j\in H_1} c_j+ \sum_{j\in \{H_2 - H_1\}} c_j }{ \sum_{j\in H_1} d_j +\beta+ \sum_{j\in \{H_2 - H_1\}} d_j ) }   >  \dfrac{(\sum_{j\in H_1} c_j)}{( \sum_{j\in H_1} d_j+\beta)}.  \\
    \end{equation*}

    which gives
     \begin{equation*}
    Z_{H_2} >  Z_{H_1}.
    \end{equation*}

    Therefore, the  objective function value of any prime pack is always greater than any corresponding redundant pack, therefore, the optimal pack is a prime pack. This completes the proof of part (b).

\end{itemize}
\end{proof}


\begin{lem} \label{lem1} If $c$ and $d$ are integer numbers, $k,l=1,\hdots, n$,~ $ l d + \beta > 0$ and $k + l \leq n$, then

\begin{equation*}
    \dfrac{kc}{kd+ \beta} < \dfrac{kc+lc}{kd+ld+ \beta}, ~ ~   \text{ if } c > 0
\end{equation*}

and
\begin{equation*}
    \dfrac{kc}{kd+ \beta} > \dfrac{kc+lc}{kd+ld+ \beta}, ~ ~  \text{ if } c < 0.
\end{equation*}

\end{lem}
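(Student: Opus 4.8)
The plan is to reduce both statements to a single sign computation for the difference of the two fractions. First I would pin down the positivity of the denominators involved. For an integer $s$, the quantity $sd+\beta$ is exactly the value of $\vec{d}\vec{x}+\beta$ at a pack of size $s$, so the standing assumption of the paper --- that $\vec{d}\vec{x}+\beta>0$ for every feasible solution --- together with the hypothesis $k+l\le n$ (which is why $k$, $l$ and $k+l$ are in the relevant range of pack sizes) yields $kd+\beta>0$ and $kd+ld+\beta=(k+l)d+\beta>0$; the remaining positivity $ld+\beta>0$ is assumed outright.

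Next I would subtract the two fractions over the common denominator $(kd+\beta)\big((k+l)d+\beta\big)$. Its numerator is
\[
(k+l)c(kd+\beta)-kc\big((k+l)d+\beta\big)=c\big[(k+l)\beta-k\beta\big]=l\beta c,
\]
since the two copies of $k(k+l)cd$ cancel. Hence
\[
\frac{kc+lc}{kd+ld+\beta}-\frac{kc}{kd+\beta}=\frac{l\beta c}{(kd+\beta)\big((k+l)d+\beta\big)}.
\]

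Finally I would read off the sign. By the first step the denominator $(kd+\beta)\big((k+l)d+\beta\big)$ is positive, and $l\beta>0$ because $l\ge 1$ and $\beta>0$; so the displayed difference has the sign of $c$. If $c>0$ it is strictly positive, which is the first inequality, and if $c<0$ it is strictly negative, which is the second. The one step that genuinely needs care is the positivity of \emph{both} denominators: the algebraic identity above holds in any case, but if the two denominators had opposite signs the direction of the inequality would reverse, so one must really invoke the standing assumption (or equivalently add $(k+l)d+\beta>0$ to the hypotheses). Everything else is the routine cancellation shown, and the integrality of $c,d$ plays no role beyond the intended application of the lemma.
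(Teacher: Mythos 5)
Your proof is correct and is essentially the paper's argument run in reverse: the paper starts from $lc\beta>0$ and adds $(kc)(kd+\beta)+(kc)(ld)$ to both sides to reach the cross-multiplied inequality $(kd+\beta)(kc+lc)>(kc)(kd+ld+\beta)$, while you compute the difference of the two fractions and find its numerator is $l\beta c$ --- the same identity. You are in fact more careful than the paper on the one delicate point, the positivity of both denominators $kd+\beta$ and $(k+l)d+\beta$, which the paper simply asserts without deriving it from the stated hypotheses or the standing assumption.
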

\begin{proof}
For given integer numbers $c$ and $d$, if $c >0$ then
\begin{equation}
 \label{e3}   lc \beta > 0
\end{equation}

add  $(kc)(kd+\beta) + (kc)(ld)$ both sides of the inequality  (\ref{e3}), we get

 \begin{equation}
 \label{e4}  (kc)(kd+\beta) + (kc)(ld) +  lc \beta > (kc)(kd+\beta) + (kc)(ld)
\end{equation}

after rearranging inequality (\ref{e4}) we get

\begin{eqnarray}
 \nonumber  (kc)(kd+\beta) + (lc)(kd+\beta)  &&> (kc)(kd+\beta) + (kc)(ld) \text{ which is same as } \\
 \label{e5} (kd+\beta) (kc+lc) && > (kc)(kd+ld+ \beta)
\end{eqnarray}

since $(kd+\beta) > 0$ and $ (kd+ld+ \beta) > 0$, divide both sides of the inequality  (\ref{e5}) by $(kd+\beta)(kd+ld+ \beta)$ , we get
\begin{equation}
\label{e6} \dfrac{kc}{kd+ \beta} < \dfrac{kc+lc}{kd+ld+ \beta}
\end{equation}

which proves the first part.\\

Now if $c < 0$  then

\begin{equation*}
    \dfrac{kc}{kd+ \beta} > \dfrac{kc+lc}{kd+ld+ \beta},
\end{equation*}
this  can be proved  in a similar manner as we did the first part.

This completes the proof.
\end{proof}

\begin{thm}
If $c_i = c, $ and  $d_i = d, \forall i=1,\hdots,n$, and  $H^*$  is an optimal  pack of LFSPP: (a) if $c > 0$, then $H^*$ is a  largest cardinality prime pack solution of LFSPP, (b) if $c < 0$, then $H^*$ is a  smallest cardinality pack solution of LFSPP.
\end{thm}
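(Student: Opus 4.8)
The plan is to reduce the statement to a one-variable monotonicity fact. Since all linear costs equal $c$ and all linear weights equal $d$, the LFSPP objective value of a pack $H$ depends on $H$ only through $k := |H|$: with $\alpha = 0$ it equals $g(k) := \dfrac{kc}{kd+\beta}$. The cardinalities $k$ that can occur are cardinalities of actual packs, and every pack is a feasible solution of LFSPP, so $kd+\beta > 0$ for each such $k$ by the standing assumption $\vec d\vec x+\beta>0$. Thus it is enough to determine the monotonicity of $g$ on the finite set $K$ of pack cardinalities.

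First I would establish that $g$ is strictly increasing on $K$ when $c > 0$ and strictly decreasing on $K$ when $c < 0$. For $k, k' \in K$ with $k < k'$ one has the identity $g(k') - g(k) = \dfrac{(k'-k)\,c\,\beta}{(kd+\beta)(k'd+\beta)}$, and since $\beta > 0$, $k'-k > 0$, and both $kd+\beta$ and $k'd+\beta$ are positive, the sign of $g(k')-g(k)$ is exactly the sign of $c$. This is the same computation that underlies Lemma \ref{lem1}, which can instead be quoted directly with $k=|H_1|$ and $l=k'-k$; the positivity hypotheses needed there hold because both cardinalities correspond to feasible solutions, so no issue arises even when $d<0$.

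Next, for part (a) with $c > 0$: because the objective value of a pack is $g$ evaluated at its cardinality and $g$ is strictly increasing on $K$, any optimal pack $H^*$ must have the largest cardinality among all packs — if some pack $H'$ had $|H'| > |H^*|$ then its objective value $g(|H'|)$ would exceed $g(|H^*|)$, contradicting optimality of $H^*$. It then remains only to observe that a pack of maximum cardinality is necessarily prime: if $H^*$ were redundant there would exist $j \notin H^*$ with $H^* \cup \{j\}$ again a pack, of strictly larger cardinality, which is impossible. Hence $H^*$ is a largest-cardinality prime pack solution. Part (b), with $c < 0$, is symmetric: $g$ is now strictly decreasing on $K$, so the identical argument (read with the inequalities reversed) forces $H^*$ to have the smallest cardinality among all pack solutions.

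The whole argument is short once the reduction to $g$ is in place; the only points that need a little care are (i) checking that the denominators $kd+\beta$ stay positive for the relevant cardinalities $k$, so that the sign computation for $g(k')-g(k)$ — equivalently, the hypotheses of Lemma \ref{lem1} — is legitimate even when $d<0$, and (ii) the small structural remark in part (a) that maximum cardinality forces primality, which is really the only nontrivial content. For part (b) one may also want to note that the minimum is attained (e.g.\ by the all-zero vector, which counts as a pack solution in the sense used here), so the conclusion is not vacuous.
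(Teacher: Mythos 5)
Your proof is correct and follows essentially the same route as the paper: the identity for $g(k')-g(k)$ is exactly the content of Lemma~\ref{lem1}, and the reduction of the objective value to a monotone function of the pack's cardinality is the paper's argument as well. The only organizational difference is in part (a), where you deduce primality from maximality of cardinality among all packs, rather than (as the paper does) first proving primality by adjoining redundant columns and then comparing cardinalities among prime packs --- a slightly cleaner arrangement of the same ideas, and your explicit care about the positivity of the denominators $kd+\beta$ is a point the paper glosses over.
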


\begin{proof}(a)  If $c>0$ :\\
 if $H^*$ is  not a prime pack then we can always add redundant columns for $H^*$ and find a prime pack $H^{**}$ of LFSPP and corresponding objective function values for $H^*$  and $H^{**}$ are following:

  \begin{equation*}
    Z_{H^*} = \dfrac{\sum_{j \in H^*} c }{\sum_{j \in H^*} d+ \beta}  \text{ and } Z_{H^{**}}= \dfrac{\sum_{j \in H^*} c + \sum_{j \in \{H^{**}-H^*\} } c  }{\sum_{j \in H^*} d +\sum_{j \in \{H^{**}-H^*\} } d+ \beta}
\end{equation*}
    using  first part of Lemma \ref{lem1} we can claim that
     \begin{equation*}
    Z_{H^*} <  Z_{H^{**}}
\end{equation*}
    which contradicts the optimality of $H^*$. Therefore, $H^*$ is a prime pack of the given LFSPP.\\

    If among all prime packs of LFSPP $H^*$ is not of  the largest cardinality then there exist a prime cover $H^o$ of LFSPP  such that $|H^*| < |H^o|$ and

    \begin{equation*}
    Z_{H^*} = \dfrac{\sum_{j \in H^*} c }{\sum_{j \in H^*} d+ \beta}  \text{ and }Z_{H^o} = \dfrac{\sum_{j \in H^o} c }{\sum_{j \in H^o} d+ \beta}
\end{equation*}

 since $|H^*| < |H^o|$ , then $\sum_{j \in H^*} c < \sum_{j \in H^o} c$, along with  the first part of Lemma \ref{lem1} we can claim that
\begin{equation*}
    Z_{H^*} <  Z_{H^{o}}
\end{equation*}

which contradicts the optimality of $H^*$. Therefore, $H^*$ if the prime pack of the largest cardinality among all prime packs of LFSPP. \\

(b)   If $c< 0$ : \\

     If among all  packs of LFSPP $H^*$ is not of the smallest  cardinality then there exist a pack $H^o$ of LFSPP  such that $|H^*| > |H^o|$ and

    \begin{equation*}
    Z_{H^*} = \dfrac{\sum_{j \in H^*} c }{\sum_{j \in H^*} d+ \beta}  \text{ and }Z_{H^o} = \dfrac{\sum_{j \in H^o} c }{\sum_{j \in H^o} d+ \beta}
\end{equation*}

but since $|H^*| > |H^o|$ , then $\sum_{j \in H^*} c < \sum_{j \in H^o} c$, along with  the second part of Lemma \ref{lem1} we can claim that
\begin{equation*}
    Z_{H^*} <  Z_{H^{o}}
\end{equation*}

which  contradicts the optimality of $H^*$. Therefore, $H^*$ is the  pack of LFSPP of the smallest cardinality. This completes the proof.

\end{proof}


\section{Conclusion}

I would  like to explore non-linear set packing problems in future research. I would like to thank Prof. Abraham P. Punnen for his valuable suggestions during the preparation of this note.

\end{document}